\newtheorem{theorem}{\emph{\underline{Theorem}}}
\newtheorem{lemma}{\emph{\underline{Lemma}}}
\newtheorem{corollary}{\emph{\underline{Corollary}}}
\newtheorem{remark}{\bf \emph{\underline{Remark}}}
\def\({\left(}
\def\){\right)}
\def\b0{{\mathbf{0}}}
\newcommand{\nn}{\nonumber}
\IEEEoverridecommandlockouts\IEEEpubid{\makebox[\columnwidth]{ 978-1-6654-3540-6/22~\copyright~2022 IEEE \hfill} \hspace{\columnsep}\makebox[\columnwidth]{ }}
\begin{document}
\captionsetup[figure]{name={Fig.}}

\title{\huge 
Multi-Hop Beam Routing for Hybrid Active/Passive \\ IRS Aided Wireless Communications} 
\author{\IEEEauthorblockN{Yunpu Zhang and Changsheng You}
\IEEEauthorblockA{Department of Electronic and Electrical Engineering\\
Southern University of Science and Technology\\
Email: zhangyp2022@mail.sustech.edu.cn, youcs@sustech.edu.cn}}

\maketitle

\begin{abstract} 
Prior studies on intelligent reflecting surface (IRS) have mostly considered wireless communication systems aided by a \emph{single passive} IRS, which, however, has limited control over wireless propagation environment and suffers from product-distance path-loss. To address these issues, we propose in this paper a new \emph{hybrid active/passive} IRS aided wireless communication system, where an \emph{active} IRS and multiple passive IRSs are deployed to assist the communication between a base station (BS) and a remote user in complex environment, by establishing a multi-hop reflection path across active/passive IRSs. In particular, the active IRS enables signal reflection with power amplification, thus effectively compensating the severe path-loss in the multi-reflection path. To maximize the achievable rate at the user, we first design the optimal  beamforming of the BS and selected (active/passive) IRSs for a given multi-reflection path, and then propose an efficient algorithm to obtain the optimal multi-reflection path by using the path decomposition method and graph theory.
We show that the active IRS should be selected to establish the beam routing path when its amplification power and/or number of active reflecting elements are sufficiently large. Last, numerical results demonstrate the effectiveness of the proposed hybrid active/passive IRS beam routing design as compared to the benchmark scheme with passive IRSs only.
\end{abstract}
\begin{IEEEkeywords}
Intelligent reflecting surface (IRS), active IRS, cooperative passive beamforming, beam routing.
\end{IEEEkeywords}
\vspace{-8pt}
\section{Introduction}

Intelligent reflecting surface (IRS) has emerged as a promising technology to smartly reconfigure the wireless propagation environment by dynamically tuning its reflecting elements \cite{wu2021intelligent}. This thus has motivated substantial research recently to incorporate IRS into traditional wireless systems for improving the communication performance (see, e.g., \cite{liu2021reconfigurable,9690635,huang2020holographic,9724202}). 

In the existing works on IRS, most of them have considered the basic setup of the \emph{single passive} IRS aided wireless system. However, employing only a single IRS usually has limited control over the wireless channels, hence may not be able to unlock the full potentials of IRS. For example, there may not exist a blockage-free reflection link between the base station (BS) and a remote user via a single IRS in complex environment. To address this issue, new research efforts have been recently devoted to designing efficient multi-IRS aided systems by deploying two or more IRSs in the network that collaboratively establish multi-hop signal reflection paths from the BS to the users for bypassing the obstacles in between (see, e.g., \cite{you2020wireless, mei2020cooperative, mei2021intelligent}). 
Despite the prominent multiplicative passive beamforming gain, the multi-passive-IRS aided system suffers from severe \emph{product-distance} path-loss arising from the multi-hop reflection.
\begin{figure}[t]
    \centering
    \includegraphics[width=7cm]{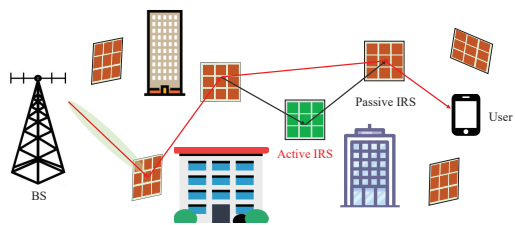}
    \caption{Hybrid active/passive IRS aided wireless communication systems.}\label{fig:system}
    \vspace{-22pt}
\end{figure}

To deal with the above issue, a new IRS architecture, called \emph{active} IRS, has been recently proposed \cite{long2021active,zhang2021active,you2021wireless,zeng2021active}. Specifically, with low-power reflection-type amplifiers equipped (e.g., tunnel diode and negative impedance converter), the active IRS is capable of reflecting incident signals with \emph{power amplification} at full-duplex mode, thus effectively compensating the product-distance path-loss at modest energy and hardware cost. Note that the active IRS generally outperforms the conventional amplify-and-forward (AF) relay since the latter operates at half-duplex mode and thus suffers lower spectrum efficiency.  These thus motivate the current work to incorporate the active IRS into the wireless system aided by passive IRSs only, so as to reap both the active-IRS amplification power gain and the multiplicative IRS beamforming gain. However, it remains unknown how to leverage the active and passive IRSs for constructing effective multi-hop beam routing paths and whether the newly added active IRS can bring significant performance gain.

To answer the above questions, we propose in this paper a new \emph{hybrid active/passive} IRS aided wireless communication system as shown in Fig.~1, where an active IRS is added in a multi-passive-IRS aided system to assist the  single-user downlink communication via \emph{cooperative beam routing}. To maximize the achievable rate of the user, we first design the optimal  beamforming of the BS and selected (active/passive) IRSs for a given multi-reflection path, and then propose an efficient algorithm to obtain the optimal multi-reflection path by using the path decomposition method and graph theory. We show that the active IRS should be selected to establish the beam routing path when its amplification power and/or number of active reflecting elements are sufficiently large. Last, numerical results demonstrate the effectiveness of the proposed hybrid active/passive IRS beam routing design as compared to the conventional system with passive IRSs only.
\vspace{-5pt}
\section{System Model}
\vspace{-4pt}
As shown in Fig. 1, we consider a hybrid active/passive IRS aided wireless  communication system, where one active IRS (with $N$ elements) and multiple passive IRSs (each with $M$ elements) are deployed to assist the downlink communication from a $T$-antenna BS to a single-antenna user.\footnote{The extensions to the cases with multiple active IRSs and/or multiple users are more complicated, which are left for future work.} We assume that the direct BS$\to$user channel is severely blocked by scattered obstacles, hence the BS can  communicate with the user via a multi-reflection path only that is formed by a set of selected IRSs. We denote by $\bar{\mathcal{J}}\triangleq\{0, \mathcal{J}, J+1\}$ the set of all nodes, where nodes $0$ and $J + 1$ refer to the BS and user, respectively; and  $\mathcal{J}\triangleq\{1,2, \cdots, J\}$ denotes the set of active/passive IRSs with the active IRS indexed by $\ell\in \mathcal{J}$. Moreover, the number of elements/antennas of each node is denoted by $U_j, \forall j\in \bar{\mathcal{J}}$, e.g., $U_0=T$, $U_{\ell}=N$.

\underline{\bf Channel model:} Let $\boldsymbol{H}_{0, j}$, $j \in {\mathcal{J}}$  denote the channel from the BS to the active/passive IRS $j$; $\boldsymbol{H}_{i, j}$ with $i, j \in {\mathcal{J}}, i\neq j$ represent the channel from IRS $i$ to IRS $j$; and $\boldsymbol{h}_{i, J+1}^{H}$, $i \in \{0, \mathcal{J}\}$ denote the channel from the BS/IRS to the user, where the dimension of each channel depends on the number of antennas/elements of the corresponding nodes. The above channels are modeled as follows.

Similar to \cite{mei2020cooperative}, we consider the LoS links only for the IRS reflection design, while treating all non-LoS (NLoS) links as part of the environment scattering, which has been shown to have a marginal effect on the user performance, especially in high-frequency bands due to the much higher path-loss in NLoS channels \cite{mei2021intelligent}. As such, we denote by $a_{i,j}$, $i,j\in \bar{\mathcal{J}}$ the channel state indicator, where $s_{i,j}=1$ means that there exists a LoS link between nodes $i$ and $j$, and $s_{i,j}=0$ otherwise. The LoS availability for all links is assumed to be known \emph{a priori} by using e.g., offline/online beam training methods \cite{you2020fast,mei2021intelligent}. As such, if $s_{i,j}=1$ for $i,j \in {\mathcal{J}}, i\neq j$, the inter-IRS channel, $\boldsymbol{H}_{i, j}$, can be modeled as
\begin{align}
\boldsymbol{H}_{i, j}=h_{i,j}\boldsymbol{a}_{\rm r}\left(\vartheta^{\rm r}_{i,j}, \theta^{\rm r}_{i,j}, U_j\right) \boldsymbol{a}_{\rm t}^H \left(\vartheta^{\rm t}_{i,j}, \theta^{\rm t}_{i,j}, U_i\right),\nn
\vspace{-7pt}
\end{align}
where $h_{i,j}=\frac{\sqrt{\beta}}{d_{i, j}} e^{-\frac{\jmath 2 \pi d_{i, j}}{\lambda}}$ denotes the complex channel gain of the link with $d_{i,j}$ denoting the link distance, $\beta$ denoting the reference channel gain at a distance of $1$ meter (m), and $\lambda$ denoting the carrier wavelength.  Moreover, $\vartheta^{\rm r}_{i,j}$ (or $\theta^{\rm r}_{i,j})$  denotes the azimuth (or elevation) angle-of-arrival (AoA) at node $j$ from node $i$, $\vartheta^{\rm t}_{i,j}$ (or $\theta^{\rm t}_{i,j})$ denotes the azimuth (or elevation) angle-of-departure (AoD) from node $i$ to node $j$; and $\boldsymbol{a}_{\rm r}$ and $\boldsymbol{a}_{\rm t}$ denote respectively the receive and transmit steering vectors. Specifically, based on the uniform rectangular array (URA) model for the IRS, $\boldsymbol{a}_{\rm r}$ can be modeled as 
$\boldsymbol{a}_{\rm r}\left(\vartheta^{\rm r}_{i,j}, \theta^{\rm r}_{i,j}, U_j\right)= \boldsymbol{u}(\frac{2 d_{\rm I}}{\lambda}\sin\theta^{\rm r}_{i,j} \cos\vartheta^{\rm r}_{i,j}, U_j^{(1)})\otimes \boldsymbol{u}(\frac{2 d_{\rm I}}{\lambda}\cos\theta^{\rm r}_{i,j}, U_j^{(2)})$, where $U_j^{(1)}$ and $U_j^{(2)}$ denote respectively the number of horizontal and vertical elements of node $j$, and the function $\boldsymbol{u}$ is defined as
$\boldsymbol{u}(\zeta, U)\triangleq [1, e^{-\jmath \pi \zeta}, \dots, e^{-\jmath (U-1) \pi \zeta}]$.
Similarly, the transmit steering vector, $\boldsymbol{a}_{\rm t}$, can be modeled by the same method. Besides, the BS$\to$IRS channel, $\boldsymbol{H}_{0, j}$, can be  modeled as
\begin{align}
\boldsymbol{H}_{0, j}=h_{0,j}\boldsymbol{a}_{\rm r}\left(\vartheta^{\rm r}_{0,j}, \theta^{\rm r}_{0,j}, U_j\right) \boldsymbol{a}_{\rm t}^H \left(\vartheta^{\rm t}_{0,j}, U_0\right),\nn
\end{align}
where  $\boldsymbol{a}_{\rm t}^H \left(\vartheta^{\rm t}_{0,j}, U_0\right)=\boldsymbol{u}^H(\frac{2 d_{\rm I}}{\lambda}\cos\theta^{\rm t}_{0,j}, U_0)$ is the transmit steering vector of the BS based on the uniform linear array (ULA) model. The BS/IRS$\to$user channel can be modeled as
\begin{align}
\boldsymbol{h}^H_{i, J+1}=h_{i,J+1}\boldsymbol{a}_{\rm t}^H \left(\vartheta^{\rm t}_{i,J+1}, \theta^{\rm t}_{i,J+1}, U_i\right).\nn
\end{align}

{\underline{\bf IRS model:}} Let $\boldsymbol{\Psi}_{j}=\operatorname{diag}\left(e^{\jmath \phi_{j, 1}}, e^{\jmath \phi_{j, 2}}, \cdots, e^{\jmath \phi_{j, M}}\right)$ denote the reflection matrix of the passive IRS $j \in \mathcal{J} \setminus \{\ell\}$, where for simplicity we set the reflection amplitude as one (i.e., its maximum value), and $\phi_{j, m}\in[0, 2\pi]$ represents the phase-shift at element $m \in \mathcal{M} \triangleq\{1, \cdots, M\}$. Besides,  the reflection matrix of the active IRS is denoted by  $\boldsymbol{\Psi}_{\ell}=\operatorname{diag}\left(\eta_{1}e^{\jmath \phi_{\ell, 1}}, \eta_{2}e^{\jmath \phi_{\ell, 2}}, \cdots, \eta_{N}e^{\jmath \phi_{\ell, N}}\right)$, where $\eta_{n}$ and $\phi_{\ell, n}$ represent the reflection amplitude and phase-shift at each active element $n \in \mathcal{N} \triangleq  \{1, \cdots, N\} $, respectively. 
Moreover, based on the LoS channel model, it can be shown that without loss of optimality, all active reflecting elements should adopt the same amplification factor, i.e., $\eta_{n}=\eta$, $\forall n \in \mathcal{N}$. Accordingly, $\boldsymbol{\Psi}_{\ell}$ can be equivalently expressed as $\boldsymbol{\Psi}_{\ell}=\eta\boldsymbol{\Phi}_{\ell}$, where $\boldsymbol{\Phi}_{\ell} \triangleq \operatorname{diag}\left(e^{\jmath \phi_{\ell, 1}}, e^{\jmath \phi_{\ell, 2}}, \cdots, e^{\jmath \phi_{\ell, N}}\right)$. It is worth noting that the active IRS incurs non-negligible thermal noise at all reflecting elements as compared to the passive IRS. The amplification noise is denoted by $\boldsymbol{n}_{\mathrm{F}} \in \mathbb{C}^{N \times 1}$, which is assumed to follow the independent circularly symmetric complex Gaussian distribution, i.e., $\boldsymbol{n}_{\mathrm{F}} \sim \mathcal{C} \mathcal{N}\left(\mathbf{0}_{N}, \sigma_\mathrm{F}^{2} \mathbf{I}_{N}\right)$ with $\sigma_\mathrm{F}^{2}$ denoting the amplification noise power.

\vspace{-5pt}
\section{Problem Formulation}

As the signal models for the IRS beam routing  with and without the active IRS differ significantly, we divide the IRS beam routing design into two cases and formulate their corresponding optimization problems.

\vspace{-5pt}
\subsection{Passive-IRS Beam Routing} 
First, consider the case where the active IRS is not involved in the beam routing. Let $\tilde{\Omega}=\left\{a_{1}, a_{2}, \ldots, a_{\tilde K}\right\}$ define the multi-reflection path from the BS to the user, where $\tilde{K}$ is the number of selected passive IRSs and $a_{\tilde{k}} \in \{\mathcal{J}\setminus \ell\}$, $\tilde{k} \in \tilde{\mathcal{K}} \triangleq  \{1, \cdots, \tilde{K}\}$ denote the index of $\tilde{k}$-th selected IRS.
Then the BS$\to$user equivalent channel is given by
\begin{equation}
\tilde{\boldsymbol {g}}_{\rm BU}^H=\boldsymbol{h}_{a_{\tilde K}, J+1}^{H} \boldsymbol{\Psi}_{a_{\tilde K}} \prod_{{\tilde k} \in \{\mathcal{\tilde K} \setminus \tilde K\}}\left(\boldsymbol{H}_{a_{{\tilde k}}, a_{{\tilde k}+1}} \boldsymbol{\Psi}_{a_{{\tilde k}}}\right) \boldsymbol{H}_{0, a_{1}}.
\end{equation}
As such, the received signal at the user via the multi-reflection path without the active IRS involved is given by
\vspace{-5pt}
\begin{equation}
y_{\mathrm{pas}}=\tilde{\boldsymbol {g}}_{\rm BU}^H   \boldsymbol{w}_{\rm B}x+n_0,
\vspace{-5pt}
\end{equation}
where $x$ denotes the transmitted signal with power $P_{\mathrm{B}}$, $\boldsymbol{w}_{\rm B} \in \mathbb{C}^{T\times 1}$ denotes the normalized beamforming vector of the BS with  $||\boldsymbol{w}_{\rm B} ||^{2}=1$, and $n_0$ denotes the received Gaussian noise at the user with power $\sigma^{2}$. The corresponding achievable rate in bits/second/Herz (bps/Hz) is given by
$R_{\mathrm{pas}}=\log_{2}\left(1+\gamma_{\mathrm{pas}}\right)$, where the received signal-to-noise-ratio (SNR) is \begin{equation}
\gamma_{\mathrm{pas}}=P_{\mathrm{B}} |\tilde{\boldsymbol{g}}_{\mathrm{BU}}^{H}  \boldsymbol{w}_{\rm B}|^{2}/\sigma^{2}.  
\vspace{-3pt}
\end{equation}

Note that the optimization problem for maximizing $R_{\mathrm{pas}}$ has been studied in \cite{mei2020cooperative}, for which the optimal beam routing design can be obtained by using a similar method in Section~\ref{Sec:Alg}, as detailed later.

\vspace{-6pt}
\subsection{Hybrid-IRS Beam Routing with Active IRS Involved} 
Next, we consider the case where the active IRS is assumed to involve in the beam routing. Let $\Omega=\left\{a_{1}, a_{2}, \ldots, a_{K}\right\}$ define the corresponding multi-reflection path from the BS to the user over $K$ active/passive IRSs,  where $a_{k} \in \mathcal{J}$, $k \in \mathcal{K} \triangleq  \{1, \cdots, K\}$ denotes the index of the $k$-th selected IRS. In particular,  the routing index of the active IRS $\ell$ is denoted as ${\mu(\ell)}$, i.e.,  $a_{\mu(\ell)}=\ell$. To facilitate the beam routing design in the sequel, we divide the multi-reflection path into two sub-paths. One is the reflection path from the BS to the active IRS, whose equivalent channel can be obtained as
\begin{equation}
\boldsymbol{G}_{\mathrm{BA}}=\prod_{k \in \{1, \cdots, \mu(\ell)-1\}}\left(\boldsymbol{H}_{a_{k}, a_{k+1}} \boldsymbol{\Psi}_{a_{k}}\right) \boldsymbol{H}_{0, a_{1}}.\label{Eq:GBA}
\vspace{-5pt}
\end{equation}
The other one is the reflection path from the active IRS to the user, which is given by
\begin{align}
\!\!\!\boldsymbol{g}^H_{\mathrm{AU}}&=\boldsymbol{h}_{a_{K}, J+1}^{H} \boldsymbol{\Psi}_{a_{K}} \nn\\
& \times \prod_{k \in \{\mu(\ell)+1, \cdots, K-1\}}\left(\boldsymbol{H}_{a_{k}, a_{k+1}} \boldsymbol{\Psi}_{a_{k}}\right) \boldsymbol{H}_{a_{\mu(\ell)}, a_{\mu(\ell)+1}}.\label{Eq:gAU}
\vspace{-5pt}
\end{align}
Based on the above, the received signal at the user via the multi-reflection path with the active IRS involved is given by
\begin{equation}
y_{\mathrm{act}}=\boldsymbol{g}_{\mathrm{AU}}^{H} \eta \boldsymbol{\Phi}_{\ell}\left(\boldsymbol{G}_{\mathrm{BA}}  \boldsymbol{w}_{\rm B}x+\boldsymbol{n}_{\mathrm{F}}\right)+n_0.
\vspace{-5pt}
\end{equation}
The corresponding achievable rate  is $R_{\mathrm{act}}=\log_{2}\left(1+\gamma_{\mathrm{act}}\right)$,
where the  received SNR is given by
\begin{equation}
\gamma_{\mathrm{act}}=\frac{P_{\mathrm{B}} |\boldsymbol{g}_{\mathrm{AU}}^{H} \eta \boldsymbol{\Phi}_{\ell} \boldsymbol{G}_{\mathrm{BA}} \boldsymbol{w}_{\rm B}|^{2}}{\|\boldsymbol{g}_{\mathrm{AU}}^{H} \eta \boldsymbol{\Phi}_{\ell}\|^{2} \sigma_{\mathrm{F}}^{2}+\sigma^{2}}.
\end{equation}
Note that the active IRS amplifies both the received signal and the noise at each active element. Let $P_{\mathrm{F}}$ denote the maximum amplification power of the active IRS, then we have
\begin{equation}
\eta^{2} (P_{\mathrm{B}}\|\boldsymbol{\Phi}_{\ell} \boldsymbol{G}_{\mathrm{BA}}  \boldsymbol{w}_{\rm B}\|^{2}+\sigma_{\mathrm{F}}^{2}\|\boldsymbol{\Phi}_{\ell} \mathbf{I}_{N}\|^{2}) \leq P_{\mathrm{F}}.\label{Eq:Power}
\end{equation}

Our goal is to maximize the achievable rate in the case of hybrid-IRS beam routing by optimizing the BS active beamforming, active/passive IRS beamforming, and hybrid-IRS beam routing. This problem can be  formulated as 
\begin{subequations}
\begin{align}
\!\!\!\max_{\substack{\boldsymbol{w}_{\rm B}, \{\boldsymbol{\Psi}_{a_{k}}\}, \boldsymbol{\Phi}_{\ell}, \eta, \Omega} }  &\log_{2} \left(1+\frac{ P_{\mathrm B}\left| \boldsymbol{g}_{\mathrm {AU}}^H \eta \boldsymbol{\Phi}_{\ell} \boldsymbol{G}_{\mathrm{BA}} \boldsymbol{w}_{\rm B} \right|^2 }{\| \boldsymbol{g}_{\mathrm{AU}}^H \eta \boldsymbol{\Phi}_{\ell}\|^{2} \sigma_{\mathrm{F}}^{2} + \sigma^{2}}\right)
\nn\\
\text{s.t.}~~~~~~
& \eqref{Eq:GBA},\eqref{Eq:gAU},\eqref{Eq:Power},\nn\\
& a_{k} \in \mathcal{J}, a_{k} \neq a_{k^{\prime}},~~\forall k,k^{\prime}\in\mathcal{K}, k \neq k^{\prime},\label{Eq:P1:1}\\
({\bf P1}):~~~~~~~~& s_{a_{k},a_{k+1}}=1,~~\forall k\in\mathcal{K}, k \neq K,\label{Eq:P1:2}\\
& s_{0,a_{1}}=s_{a_{K},J+1}=1,\label{Eq:P1:3}\\
& |[\boldsymbol{\Phi}_{\ell}]_{n,n}|=1,~~\forall n\in\mathcal{N}, \label{Eq:P1:4}
\\
& |[\boldsymbol{\Psi}_{a_{k}}]_{m,m}|=1,~~\forall m \in\mathcal{M}, k\in \mathcal{K} \setminus \{\mu(\ell)\}, \label{Eq:P1:5}
\end{align}
\end{subequations}
where $\boldsymbol{G}_{\mathrm{BA}}$ and $\boldsymbol{g}^H_{\mathrm{AU}}$ are given in \eqref{Eq:GBA} and \eqref{Eq:gAU}, respectively, which are  determined by the multi-reflection path $\Omega$ and active/passive IRS beamforming. Constraints described in \eqref{Eq:P1:1}--\eqref{Eq:P1:3} can ensure routing path $\Omega$ is feasible. 

Note that problem (P$1$) is a non-convex optimization problem due to the unit-modulus constraint
and  the intrinsic coupling between the two sub-paths (see, e.g.,  the objective and constraints \eqref{Eq:GBA}, \eqref{Eq:gAU}, \eqref{Eq:Power}). Hence, problem  (P$1$) is much more involved than that of the preceding passive-IRS case. To address this issue, an efficient method is proposed in the next section to obtain the optimal solution to problem (P$1$).

\section{Proposed Solution to Problem (P$1$)}

To solve problem (P$1$), we first optimize the joint beamforming design of the BS and IRSs for any given  multi-reflection path, based on which we further optimize the multi-reflection path. 
\vspace{-5pt}
\subsection{Joint Beamforming Optimization given Multi-reflection Path}
First, for any given feasible multi-reflection path with the active IRS involved (i.e., $\Omega$), problem (P$1$) is reduced to the following optimization problem for jointly designing the beamforming of the BS and active/passive IRSs.
\begin{subequations}
\begin{align}
\quad\ ({\bf P2}):\!\!\!\max_{\substack{\boldsymbol{w}_{\rm B}, \{\boldsymbol{\Psi}_{a_{k}}\}, \boldsymbol{\Phi}_{\ell}, \eta} }  &\log_{2} \left(1+\frac{ P_{\mathrm B}\left| \boldsymbol{g}_{\mathrm {AU}}^H \eta \boldsymbol{\Phi}_{\ell} \boldsymbol{G}_{\mathrm{BA}} \boldsymbol{w}_{\rm B} \right|^2 }{\| \boldsymbol{g}_{\mathrm{AU}}^H \eta \boldsymbol{\Phi}_{\ell}\|^{2} \sigma_{\mathrm{F}}^{2} + \sigma^{2}}\right)
\nn\\
\text{s.t.}~~~~~~
& \eqref{Eq:GBA},\eqref{Eq:gAU},\eqref{Eq:Power}, \eqref{Eq:P1:4}, \eqref{Eq:P1:5}.\nn
\end{align}
\end{subequations}
The optimal solution to problem (P$2$) is given as follows.
\begin{lemma}\label{Lem:GivenPath}
\emph{Given any feasible multi-reflection path $\Omega$, the optimal solution to problem (P$2$) is given by\begin{align}
\!\!\!\!\![\boldsymbol{\Psi}_{a_{k}}^*]_{m,m}&\!=\! \begin{cases}e^{\jmath (\angle [\underline{\boldsymbol{a}_{\rm t}}(a_{1})]_{m}-\angle[\underline{\boldsymbol{a}_{\rm r}}(a_{1})]_{m})} \!\!& \text { if } k=1 \\ e^{\jmath (\angle [\underline{\boldsymbol{a}_{\rm t}}(a_{K})]_{m}-\angle[\underline{\boldsymbol{a}_{\rm r}}(a_{K})]_{m})} & \text { if } k=K \\ e^{\jmath (\angle [\underline{\boldsymbol{a}_{\rm t}}(a_{k})]_{m}-\angle[\underline{\boldsymbol{a}_{\rm r}}(a_{k})]_{m})} & \text { otherwise }\end{cases},\!\!\!\label{Eq:PIRS}\\
\!\!\!\!\!\!\![\boldsymbol{\Phi}_{\ell}^*]_{n,n}&=e^{\jmath (\angle [\underline{\boldsymbol{a}_{\rm t}}(a_{\mu(\ell)})]_{n}-\angle[\underline{\boldsymbol{a}_{\rm r}}(a_{\mu(\ell)})]_{n})},\label{Eq:AIRS} \\
\boldsymbol{w}_{\rm B}^*&=\boldsymbol{a}_{\rm t} \left(\vartheta^{\rm t}_{0,a_{1}}, U_0\right)/\|\boldsymbol{a}_{\rm t} \left(\vartheta^{\rm t}_{0,a_{1}}, U_0\right)\|,\label{Eq:w}\\
\eta^*&=\sqrt{\frac{P_{\mathrm{F}}}{P_{\mathrm{B}}\|\boldsymbol{\Phi}^*_{\ell} \boldsymbol{G}_{\mathrm{BA}}  \boldsymbol{w}^*_{\rm B}\|^{2}+\sigma_{\mathrm{F}}^{2}\|\boldsymbol{\Phi}^*_{\ell} \mathbf{I}_{N}\|^{2}}}, \label{Eq:AF}
\end{align}}
\end{lemma}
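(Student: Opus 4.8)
The plan is to decouple problem (P$2$) into three subproblems—passive IRS phase alignment along the two sub-paths, active IRS phase alignment, and amplification factor selection—by exploiting the rank-one (LoS) structure of every channel matrix, and then show the proposed closed-form solution simultaneously achieves the upper bounds of all subproblems, hence is globally optimal.

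\emph{Step 1 (Separating the SNR into controllable pieces).} First I would substitute the rank-one channel models $\boldsymbol{H}_{i,j}=h_{i,j}\boldsymbol{a}_{\rm r}(\cdot)\boldsymbol{a}_{\rm t}^H(\cdot)$ into $\boldsymbol{G}_{\rm BA}$ and $\boldsymbol{g}_{\rm AU}^H$. Telescoping the product, $\boldsymbol{G}_{\rm BA}$ collapses to a scalar gain times $\boldsymbol{a}_{\rm r}(\text{at }a_{\mu(\ell)})\,\boldsymbol{a}_{\rm t}^H(\vartheta^{\rm t}_{0,a_1},U_0)$, where the scalar is $\big(\prod h_{\cdot,\cdot}\big)$ multiplied by inner products of the form $\boldsymbol{a}_{\rm t}^H(a_k)\boldsymbol{\Psi}_{a_k}\boldsymbol{a}_{\rm r}(a_k)$ for the intermediate passive IRSs $k=1,\dots,\mu(\ell)-1$; similarly $\boldsymbol{g}_{\rm AU}^H$ reduces to a scalar times $\boldsymbol{a}_{\rm t}^H(\text{departure from }a_{\mu(\ell)})$, with analogous inner-product factors for $k=\mu(\ell)+1,\dots,K$. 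Each such inner product has the form $\sum_m \exp(\jmath(\phi_{a_k,m}+\angle[\boldsymbol{a}_{\rm r}]_m - \angle[\boldsymbol{a}_{\rm t}]_m))$ (using $[\boldsymbol{a}_{\rm t}^H]_m=\overline{[\boldsymbol{a}_{\rm t}]_m}$ and that entries have unit modulus), whose magnitude is bounded by $U_{a_k}$ with equality iff $\phi_{a_k,m}=\angle[\underline{\boldsymbol{a}_{\rm t}}(a_k)]_m-\angle[\underline{\boldsymbol{a}_{\rm r}}(a_k)]_m$—which is exactly \eqref{Eq:PIRS}. The BS beamformer enters only through $\boldsymbol{a}_{\rm t}^H(\vartheta^{\rm t}_{0,a_1},U_0)\boldsymbol{w}_{\rm B}$, maximized in modulus by the MRT choice \eqref{Eq:w} under $\|\boldsymbol{w}_{\rm B}\|=1$, giving $\|\boldsymbol{a}_{\rm t}(\vartheta^{\rm t}_{0,a_1},U_0)\|$.

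\emph{Step 2 (Handling the active IRS and the noise term jointly).} The subtlety is that $\boldsymbol{\Phi}_\ell$ appears in both numerator and denominator of $\gamma_{\rm act}$, and it also affects the power constraint \eqref{Eq:Power} through $\eta$. I would write the numerator as $P_{\rm B}\eta^2|\boldsymbol{g}_{\rm AU}^H\boldsymbol{\Phi}_\ell\boldsymbol{a}_{\rm r}(a_{\mu(\ell)})|^2\cdot|\text{(BS-side gain)}|^2\cdot|\text{(other passive factors)}|^2$, and note that since $\boldsymbol{g}_{\rm AU}^H=c\,\boldsymbol{a}_{\rm t}^H(\text{from }a_{\mu(\ell)})$ for a scalar $c$ whose modulus is already fixed by Step 1, the active-IRS-dependent quantity in the numerator is precisely $|\boldsymbol{a}_{\rm t}^H(\text{from }a_{\mu(\ell)})\boldsymbol{\Phi}_\ell\boldsymbol{a}_{\rm r}(a_{\mu(\ell)})|^2$, again an inner product of unit-modulus vectors bounded by $N^2$ and attained by \eqref{Eq:AIRS}. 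The key observation that makes the denominator cooperate is that $\|\boldsymbol{g}_{\rm AU}^H\eta\boldsymbol{\Phi}_\ell\|^2=\eta^2|c|^2\|\boldsymbol{a}_{\rm t}^H(\text{from }a_{\mu(\ell)})\boldsymbol{\Phi}_\ell\|^2=\eta^2|c|^2 N$ is \emph{independent of the phases} $\phi_{\ell,n}$ (a diagonal unit-modulus matrix is unitary up to the amplitude), and likewise $\|\boldsymbol{\Phi}_\ell\mathbf{I}_N\|^2=N$ in \eqref{Eq:Power} is phase-independent; only $\|\boldsymbol{\Phi}_\ell\boldsymbol{G}_{\rm BA}\boldsymbol{w}_{\rm B}\|^2$ in \eqref{Eq:Power} depends on $\boldsymbol{\Phi}_\ell$, but since $\boldsymbol{G}_{\rm BA}\boldsymbol{w}_{\rm B}$ is (a scalar times) $\boldsymbol{a}_{\rm r}(a_{\mu(\ell)})$, that norm is also $N$ times a phase-independent scalar. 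Hence choosing $\boldsymbol{\Phi}_\ell$ to maximize the numerator costs nothing in the denominator or the power budget, so \eqref{Eq:AIRS} is optimal.

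\emph{Step 3 (Optimizing $\eta$).} With all phases fixed, $\gamma_{\rm act}$ becomes an increasing function of $\eta^2$ (numerator $\propto \eta^2$, denominator $= A\eta^2+\sigma^2$ with $A\ge 0$, so $\gamma_{\rm act}=P_{\rm B}B\eta^2/(A\eta^2+\sigma^2)$ is monotone increasing in $\eta^2$). Therefore the power constraint \eqref{Eq:Power} should be met with equality, and solving \eqref{Eq:Power} for $\eta$ with $\boldsymbol{\Phi}_\ell=\boldsymbol{\Phi}_\ell^*$, $\boldsymbol{w}_{\rm B}=\boldsymbol{w}_{\rm B}^*$ yields exactly \eqref{Eq:AF}. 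Finally I would assemble the pieces: the chosen phases make every magnitude inequality tight simultaneously, $\eta^*$ saturates the budget, and since each step was an upper bound that is individually achievable and the achievers are compatible, the tuple $(\boldsymbol{w}_{\rm B}^*,\{\boldsymbol{\Psi}_{a_k}^*\},\boldsymbol{\Phi}_\ell^*,\eta^*)$ is globally optimal for (P$2$).

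\textbf{The main obstacle} is Step 2: a priori one fears that aligning $\boldsymbol{\Phi}_\ell$ for the signal might inflate the amplified-noise term $\|\boldsymbol{g}_{\rm AU}^H\eta\boldsymbol{\Phi}_\ell\|^2\sigma_{\rm F}^2$ or violate \eqref{Eq:Power}, forcing a genuine trade-off; the crux is recognizing that because $\boldsymbol{\Phi}_\ell$ is diagonal unit-modulus (hence "norm-preserving") and every relevant vector it hits is rank-one/steering-structured, all the norm terms are phase-blind, so the optimization fully decouples. I would also need to be slightly careful that the common scalar phases absorbed along the chain (the $e^{-\jmath 2\pi d/\lambda}$ factors and leftover inner-product phases) do not matter because the objective depends only on $|\cdot|^2$, and to note explicitly that $\angle[\underline{\boldsymbol{a}_{\rm t}}(\cdot)]_m$, $\angle[\underline{\boldsymbol{a}_{\rm r}}(\cdot)]_m$ denote the phases of the steering vectors associated with the outgoing/incoming LoS links at that node, consistent with the path $\Omega$.
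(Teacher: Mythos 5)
Your overall route---align each passive IRS's phase shifts with the incident and departing steering vectors, align the active IRS with the cascaded channel, use MRT at the BS, and saturate the amplification-power constraint---is exactly the paper's (whose own proof is only a four-sentence sketch), and your Steps~1 and~3 are correct and considerably more explicit than what the paper provides. Your key observation in Step~2, that every norm term involving $\boldsymbol{\Phi}_{\ell}$ (namely $\|\boldsymbol{g}_{\mathrm{AU}}^{H}\eta\boldsymbol{\Phi}_{\ell}\|^{2}$, $\|\boldsymbol{\Phi}_{\ell}\mathbf{I}_{N}\|^{2}$ and $\|\boldsymbol{\Phi}_{\ell}\boldsymbol{G}_{\mathrm{BA}}\boldsymbol{w}_{\rm B}\|^{2}$) is blind to the phases $\phi_{\ell,n}$ because a diagonal unit-modulus matrix is unitary, is correct and is indeed why the active-IRS phase design decouples cleanly.

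There is, however, one genuine gap in your final assembly. You conclude that ``each step was an upper bound that is individually achievable and the achievers are compatible,'' but the objective is not a product of independently maximized factors: the BS-side variables you optimize in Step~1---the passive phases of $a_{1},\dots,a_{\mu(\ell)-1}$ and $\boldsymbol{w}_{\rm B}$---also enter the power constraint \eqref{Eq:Power} through $\|\boldsymbol{\Phi}_{\ell}\boldsymbol{G}_{\mathrm{BA}}\boldsymbol{w}_{\rm B}\|^{2}$, which is phase-blind with respect to $\boldsymbol{\Phi}_{\ell}$ but emphatically \emph{not} with respect to them. Aligning those phases increases $f_{\rm BA}=\|\boldsymbol{G}_{\mathrm{BA}}\boldsymbol{w}_{\rm B}\|^{2}$ and therefore shrinks the feasible $\eta$, so it is not immediate that maximizing the incident power at the active IRS is optimal; a priori there could be a trade-off between incident signal power and amplification factor---precisely the kind of trade-off you (rightly) worried about for $\boldsymbol{\Phi}_{\ell}$ but then did not check for the BS-side sub-path. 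The gap is easy to close: substitute the saturating $\eta$ from \eqref{Eq:AF} into $\gamma_{\mathrm{act}}$, observe that the resulting SNR is an increasing function of $f_{\rm BA}$ for any fixed value of the active-IRS-to-user gain (this is exactly the derivative computation the paper performs around \eqref{DE:2} in the proof of Lemma~\ref{Lem:Decouple}), and conclude that maximizing $f_{\rm BA}$ remains optimal despite the induced reduction in $\eta^{*}$. With that one monotonicity check added, your argument is complete and, in fact, more rigorous than the paper's own proof of the lemma.
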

\noindent where we denote $\underline{\boldsymbol{a}_{\rm t}}(a_{k}) \triangleq \boldsymbol{a}_{\rm t} (\vartheta^{\rm t}_{{a_{k}},{a_{k+1}}}, \theta^{\rm t}_{{a_{k}},{a_{k+1}}}, U_{a_{k}})$ and  $\underline{\boldsymbol{a}_{\rm r}}(a_{k}) \triangleq \boldsymbol{a}_{\rm r}(\vartheta^{\rm r}_{{a_{k-1}},{a_{k}}}, \theta^{\rm r}_{{a_{k-1}},{a_{k}}}, U_{a_{k}})$, in which $a_{0}\triangleq0$ (i.e., BS) and $a_{K+1}\triangleq J+1$ (i.e., user).

\begin{proof}
First, it can be shown that, the optimal phase-shift matrix of each passive IRS $a_{k}$ should be set as \eqref{Eq:PIRS}. Similarly, the optimal phase-shift matrix of the active IRS $\ell$ should align the cascaded BS$\to$active IRS$\to$user multi-reflection channel. Then, given any feasible multi-reflection path $\Omega$, the maximum-ratio transmission (MRT) is always the optimal BS active beamforming solution to problem (P$2$), i.e. \eqref{Eq:w}. Moreover, it can be shown that, at the optimal solution, the power constraint in \eqref{Eq:Power} is always active.
\end{proof} 

Lemma~\ref{Lem:GivenPath} shows that the optimal joint beamforming design of the BS and active/passive IRSs are closely coupled with the multi-reflection path of IRSs (see,  \eqref{Eq:PIRS}--\eqref{Eq:AF}). Specifically, the amplification factor of the active IRS is determined by the path-loss of the BS$\to$active IRS reflection link. The smaller the path-loss, the smaller the amplification factor. 
\vspace{-5pt}
\subsection{Multi-reflection Path Optimization}

Given the optimal beamforming design in Lemma \ref{Lem:GivenPath}, we have
\vspace{-5pt}
\begin{align}
f_{\rm BA}(\Omega)&\triangleq {\| \boldsymbol{G}_{\mathrm{BA}} \boldsymbol{w}_{\rm B}^*\|}^2=\frac{M^{2\mu(\ell)-2} T \beta^{\mu(\ell)}}{d_{0, a_{1}}^{2} \prod_{k=1}^{\mu(\ell)-1} d_{a_{k}, a_{k+1}}^{2}}, \label{Eq:BA}\\
f_{\rm AU}(\Omega)&\triangleq {\| \boldsymbol{g}^H_{\mathrm{AU}} \|}^2\nn\\
&=\frac{M^{2K-2\mu(\ell)} \beta^{K-\mu(\ell)+1}}{d_{a_{K}, J+1}^{2} d_{a_{\mu(\ell)}, a_{\mu(\ell)+1}}^{2} \prod_{k=\mu(\ell)+1}^{K-1} d_{a_{k}, a_{k+1}}^{2}},\label{Eq:AU}\\
\eta^2(\Omega)& =\frac{P_{\mathrm{F}}}{N (P_{\mathrm{B}} {f}_\mathrm{BA}(\Omega)+\sigma_{\mathrm{F}}^{2})},\label{Eq:eta}
\end{align}
where $f_{\rm BA}(\Omega)$ and $f_{\rm AU}(\Omega)$ represent respectively the end-to-end channel power gain of the BS$\to$active IRS and active IRS$\to$ user reflection sub-paths. 
Substituting \eqref{Eq:BA}--\eqref{Eq:eta} into problem (P$1$) yields the following equivalent problem that targets at optimizing the IRS multi-reflection path. 
\begin{subequations}
\begin{align}
({\bf P3}):\ \max_{\substack{ \Omega} }  ~~& \frac{P_{\mathrm{B}} N f_{\rm AU}(\Omega)f_{\rm BA}(\Omega)}{f_{\rm AU}(\Omega)  \sigma_{\mathrm{F}}^{2}+\frac{\sigma^{2}(P_{\mathrm{B}} f_{\rm BA}(\Omega)+\sigma_{\mathrm{F}}^{2})}{P_{\mathrm{F}}}}
\nn\\
\text{s.t.}~~
& \eqref{Eq:P1:1}-\eqref{Eq:P1:3}.\nn
\vspace{-3pt}
\end{align}
\end{subequations}
\addtolength{\topmargin}{0.02in}
Problem (P$3$) is still a non-convex optimization problem due to the coupling of the multi-reflection path in the objective and constraints of  \eqref{Eq:P1:1}--\eqref{Eq:P1:3}, which is thus generally difficult to solve. To deal with this challenge, we first show an interesting result that problem (P$3$) can be equivalently decomposed into two separate subproblems.


\begin{lemma}\label{Lem:Decouple}
\emph{The solution to problem (P$3$) can be obtained by solving the following two subproblems separately.
\begin{subequations}
\begin{align}
({\bf {P4.a}}):
\max_{\substack{ \Omega}_{\rm BA} } ~~& \frac{M^{2\mu(\ell)-2} T \beta^{\mu(\ell)}}{d_{0, a_{1}}^{2} \prod_{k=1}^{\mu(\ell)-1} d_{a_{k}, a_{k+1}}^{2}}
\nn\\
\text{s.t.}~~
& \eqref{Eq:P1:1}-\eqref{Eq:P1:3},\nn\\
({\bf {P4.b}}):\max_{\substack{ \Omega}_{\rm AU}  } ~~& \frac{M^{2K-2\mu(\ell)} \beta^{K-\mu(\ell)+1}}{d_{a_{K}, J+1}^{2} d_{a_{\mu(\ell)}, a_{\mu(\ell)+1}}^{2} \prod_{k=\mu(\ell)+1}^{K-1} d_{a_{k}, a_{k+1}}^{2}}
\nn\\
\text{s.t.}~~
& \eqref{Eq:P1:1}-\eqref{Eq:P1:3},\nn
\end{align}
\end{subequations}
where $\Omega_{\rm BA}\triangleq\{a_1, a_2, \ldots, a_{\mu(\ell)-1}\}$ and $\Omega_{\rm AU}\triangleq 
\{a_{\mu(\ell)+1}, a_{\mu(\ell)+2}, \ldots, a_{K}\}$.}
\end{lemma}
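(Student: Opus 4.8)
The plan is to exhibit the objective of (P3) as a strictly increasing function of the two decoupled path-gain quantities $f_{\rm BA}(\Omega)$ and $f_{\rm AU}(\Omega)$, so that maximizing the joint objective is equivalent to maximizing each of these two quantities separately over their respective (disjoint) variable sets. First I would observe, from \eqref{Eq:BA}--\eqref{Eq:eta}, that once the active IRS index $\ell$ and its routing position $\mu(\ell)$ are fixed, the path $\Omega$ splits cleanly into the ordered sub-path $\Omega_{\rm BA}=\{a_1,\ldots,a_{\mu(\ell)-1}\}$ feeding into $\ell$ and the ordered sub-path $\Omega_{\rm AU}=\{a_{\mu(\ell)+1},\ldots,a_K\}$ leading out of $\ell$; crucially, $f_{\rm BA}(\Omega)$ depends only on the nodes in $\Omega_{\rm BA}$ (together with $a_1$'s link to the BS and the hop $a_{\mu(\ell)-1}\to\ell$), while $f_{\rm AU}(\Omega)$ depends only on the nodes in $\Omega_{\rm AU}$ (together with the hop $\ell\to a_{\mu(\ell)+1}$ and $a_K$'s link to the user). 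The node-disjointness constraint \eqref{Eq:P1:1} does couple the two sub-paths in principle, but since the nodes are drawn from a common pool $\mathcal{J}\setminus\{\ell\}$ and the optimal routing (as will be shown in the path-selection step via graph theory) never needs to revisit a node, I would argue this coupling is inactive at optimality — or, more carefully, that the decomposition holds for each fixed choice of which nodes go to which side, and the outer optimization over $(\ell,\mu(\ell))$ and the node partition is what the subsequent algorithm handles.

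Next I would verify the monotonicity claim directly. Writing $x\triangleq f_{\rm BA}(\Omega)>0$ and $y\triangleq f_{\rm AU}(\Omega)>0$, the objective of (P3) is
\begin{equation}
F(x,y)=\frac{P_{\mathrm B}N\,xy}{y\,\sigma_{\mathrm F}^2+\frac{\sigma^2(P_{\mathrm B}x+\sigma_{\mathrm F}^2)}{P_{\mathrm F}}}.\nn
\end{equation}
Holding $y$ fixed, the numerator is linear increasing in $x$ and the denominator is linear increasing in $x$, but one checks $\partial F/\partial x>0$ because the $y\sigma_{\mathrm F}^2$ term in the denominator is not multiplied by $x$: explicitly, $F$ has the form $\frac{a x}{bx+c}$ with $a,b,c>0$ in $x$, which is strictly increasing. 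Holding $x$ fixed, $F$ has the form $\frac{d y}{ey+f}$ with $d,e,f>0$ in $y$, again strictly increasing. Hence $F$ is strictly increasing in each of $x$ and $y$ separately, so any $\Omega$ that simultaneously maximizes $f_{\rm BA}$ and $f_{\rm AU}$ maximizes $F$; and since $f_{\rm BA}$ and $f_{\rm AU}$ are functions of disjoint blocks of decision variables (for fixed $\ell$, $\mu(\ell)$, and node partition), they can indeed be maximized independently. That the maximizers of (P4.a) and (P4.b) can be realized jointly — i.e.\ the two optimal sub-paths together form a valid path satisfying \eqref{Eq:P1:1}--\eqref{Eq:P1:3} — is the point that needs the node-disjointness remark above.

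The main obstacle I anticipate is precisely this joint-feasibility issue: \eqref{Eq:P1:1} requires all $a_k$ distinct, so the optimal $\Omega_{\rm BA}$ and optimal $\Omega_{\rm AU}$ must not share a node. A clean way to handle this is to note (as the paper's graph-theoretic algorithm in Section~\ref{Sec:Alg} will formalize) that each subproblem is a shortest-weighted-path problem on a graph whose edge weights are $\ln(d_{i,j}^2/(M^2\beta))$ (so that products of distances become sums), and an optimal such path is always simple (no repeated vertices) because removing a cycle only decreases total weight; moreover one can enforce disjointness across the two sides by searching over the split point and the active-IRS location, which is a finite enumeration. So the rigorous statement is: for each fixed $(\ell,\mu(\ell))$ and each admissible partition of the remaining budget of hops, (P3) decomposes as claimed, and the overall optimum of (P3) is obtained by taking the best such decomposed solution. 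I would close by remarking that when the active IRS is not used, (P3) degenerates to a single shortest-path problem, recovering the passive-IRS case of \cite{mei2020cooperative}, which is consistent with the claimed equivalence.
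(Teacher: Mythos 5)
Your proposal is correct and follows essentially the same route as the paper: the paper's proof likewise fixes one of $f_{\rm BA}$, $f_{\rm AU}$ at a time, writes the SNR in the forms \eqref{DE:1} and \eqref{DE:2}, and concludes from the strict monotonicity in each argument that the path optimization decouples into the two sub-path gain maximizations. Your additional discussion of the node-disjointness constraint \eqref{Eq:P1:1} flags a joint-feasibility subtlety that the paper's proof passes over in silence, and is a reasonable (if informal) supplement rather than a different approach.
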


\begin{proof}
First, we consider the effect of $f_{\rm AU}$ on the received SNR, given any fixed $f_{\rm BA}$. The received SNR is given by
\vspace{-3pt}
\begin{equation}\label{DE:1}
\gamma_{\mathrm{act}}^{(1)}= \frac{P_{\mathrm{B}} N f_{\rm BA}}{  \sigma_{\mathrm{F}}^{2}+\frac{\sigma^{2}(P_{\mathrm{B}} f_{\rm BA}+\sigma_{\mathrm{F}}^{2})}{P_{\mathrm{F}}f_{\rm AU}}}.
\vspace{-6pt}
\end{equation}
It can be easily shown from \eqref{DE:1} that $\frac{\partial \gamma_{\mathrm{act}}}{\partial f_{\rm AU}} > 0$, and hence maximizing $\gamma_{\mathrm{act}}^{(1)}$ is equivalent to maximizing $f_{\rm AU}$ and  independent of $f_{\rm BA}$. Second,  we consider only the effect of the $f_{\rm BA}$ given any  $f_{\rm AU}$. The corresponding received SNR is given by
\vspace{-3pt}
\begin{equation}\label{DE:2}
\gamma_{\mathrm{act}}^{(2)}= \frac{P_{\mathrm{B}}Nf_{\rm AU}}{\frac{f_{\rm AU}  \sigma_{\mathrm{F}}^{2}}{f_{\rm BA}}+\frac{\sigma^{2}(P_{\mathrm{B}} +\sigma_{\mathrm{F}}^{2}/f_{\rm BA})}{P_{\mathrm{F}}}},
\end{equation}
for which we can easily show that  $\frac{\partial \gamma_{\mathrm{act}}}{\partial f_{\rm BA}}>0$. This indicates that regardless of the multi-reflection path from the active IRS to user, it is always optimal to maximize $f_{\rm BA}$ for maximizing $\gamma_{\mathrm{act}}^{(2)}$.
These lead to the conclusion that maximizing the received SNR over the multi-reflection path can be equivalently transformed to the channel power gain maximization of the two sub-paths. Combining the above results with \eqref{Eq:BA} and \eqref{Eq:AU} leads to the desired result.
\end{proof}
\vspace{-3pt}
\begin{remark}[Decomposable multi-reflection path]\label{Rem:RP}
\emph{Lemma~\ref{Lem:Decouple} shows an interesting result that the hybrid-IRS beam routing design with the active IRS involved can be decomposed into two separate sub-path routing designs, which can be intuitively explained as follow. First, the BS$\to$active IRS reflection path determines the incident signal power at the active IRS. The larger the incident signal power, the smaller the effective received noise at the user (see \eqref{DE:2}) and hence a higher SNR. Second, the active IRS$\to$user reflection path determines the path-loss of the reflected signal. The smaller the path-loss, the larger the received signal power and hence a higher SNR.}
\end{remark}
\vspace{-3pt}
\begin{remark}[Reduced routing design complexity]\emph{Note that the original multi-reflection path design in problem (P$3$) is determined by all the $J$ IRSs, and the incoming and outgoing paths associated with the active IRS are intricately coupled in general. Specifically, the exhaustive search for the optimal routing has an exponential computational complexity. By using Lemma~\ref{Lem:Decouple}, the high-dimensional hybrid-IRS multi-reflection path design can be efficiently decoupled into two separate and lower-dimensional sub-path designs, leading to significantly reduced design complexity.} 
\end{remark}

Next, we solve the optimization problems  (P$4$.a) and  (P$4$.b) as follows by using graph theory  \cite{mei2020cooperative}.

\subsubsection{Problem reformulation}\label{Sec:Alg}

First, it can be shown that maximizing $f_{\rm BA}(\Omega)$ in problem (P$4$.a) and $f_{\rm AU}(\Omega)$ in problem (P$4$.b) are equivalent to minimizing 
\begin{align}\label{Eq:inv_G_BA}
\frac{1}{f_{\rm BA}(\Omega)}&=\frac{M^{2}}{T} \cdot \frac{d_{0, a_{1}}^{2}}{M^{2} \beta} \cdot  \prod_{k=1}^{\mu(\ell)-1} \frac{d_{a_{k}, a_{k+1}}^{2}}{M^{2} \beta},\\
\label{Eq:inv_g_AU}
\!\!\!\frac{1}{f_{\rm AU}(\Omega)}\!&=\!\frac{M^{2} d_{a_{\mu(\ell)}, a_{\mu(\ell)+1}}^{2}}{M^{2} \beta} \!\!\times\!\! \frac{d_{a_{K}, J+1}^{2}}{M^{2} \beta} \!\!\!\prod_{k=\mu(\ell)+1}^{K-1}\!\! \frac{d_{a_{k}, a_{k+1}}^{2}}{M^{2} \beta}.\!\!
\end{align}
Then, by taking the logarithm of \eqref{Eq:inv_G_BA} and \eqref{Eq:inv_g_AU} and ignoring irrelevant constant terms, problems  (P$4$.a) and (P$4$.b) can be reformulated as 
\vspace{-10pt}
\begin{align}
\label{P5:shortest}
({\bf P5.a}):~~
~\min _{\left\{a_{k}\right\}_{k=1}^{\mu(\ell)-1}, K} ~~&\ln \frac{d_{0, a_{1}}}{M \sqrt{\beta}}+\sum_{k=1}^{\mu(\ell)-1} \ln \frac{d_{a_{k}, a_{k+1}}}{M \sqrt{\beta}} 
\nn\\
\text{s.t.}~~
& \eqref{Eq:P1:1}-\eqref{Eq:P1:3}.\nn
\end{align}
\vspace{-20pt}
\begin{align}
({\bf P5.b}):
~\min _{\left\{a_{k}\right\}_{k=\mu(\ell)+1}^{K}, K} ~~& \ln \frac{d_{a_{\mu(\ell)}, a_{\mu(\ell)+1}}}{M \sqrt{\beta}}+\ln \frac{d_{a_{K}, J+1}}{M \sqrt{\beta}}\nn\\
&+\sum_{k=\mu(\ell)+1}^{K-1} \ln \frac{d_{a_{k}, a_{k+1}}}{M \sqrt{\beta}}
\nn\\
\text{s.t.}~~~
& \eqref{Eq:P1:1}-\eqref{Eq:P1:3}.\nn
\end{align}

\subsubsection{Modified shortest-path algorithm}
As problems (P$5$.a) and (P$5$.b) have similar forms, we only present the algorithm for solving (P$5$.a) in the sequel, while the same method can be used for solving problem (P$5$.b). To be specific, we first recast problem (P$5$.a) into a shortest simple-path problem (SSPP), by constructing a  directed weighted graph  $\mathcal{G}_{\mathrm{BA}}=(\mathcal{V}_{\mathrm{BA}}, \mathcal{E}_{\mathrm{BA}})$, where the vertex set is defined as $\mathcal{V}_{\mathrm{BA}}\triangleq\{0, \ell\} \cup \{\mathcal{J}_{\rm BA}\}$ (note that $\mathcal{J}_{\rm BA}$ is a set of nodes between the BS and the active IRS, and is determined by the network topology). Moreover, there exists an edge between any two vertexes $i,j \in \mathcal{V}_{\mathrm{BA}}$ if  $s_{i,j}=1$ and $d_{j,0} > d_{i,0}$, hence the edge set is expressed as $\mathcal{E}_{\mathrm{BA}}\triangleq\{(0,j)|s_{0,j}=1, j\in \mathcal{V}_{\mathrm{BA}}\} \cup \{(i,j)|s_{i,j}=1, d_{j,0}>d_{i,0}, i,j\in \mathcal{V}_{\mathrm{BA}}\} \cup \{(\ell-1,\ell)|s_{\ell-1,\ell}=1, \ell-1\in \mathcal{V}_{\mathrm{BA}}\}$. In addition, the weight between any two nodes $i,j \in \mathcal{V}_{\mathrm{BA}}$ is defined as $W_{i, j}=\ln (d_{i, j}/M \sqrt{\beta}) $. 

Then,  problem (P$5$.a) can be solved by using graph-optimization algorithms. It is worth noting that different from the conventional graph with positive edge weights only, there may exist negative edge weights in the constructed graph $\mathcal{G}_{\mathrm{BA}}$  (i.e., $d_{i, j}<M \sqrt{\beta}$). Thus, we propose an efficient algorithm below to solve problem (P$5$.a) by taking into account the cases with and without negative weights, respectively \cite{mei2020cooperative}. 
\begin{itemize}
\item Without negative weights:  In this case, classical Dijkstra algorithm can be directly applied to solve problem (P$5$.a).
\item With negative weights: If there exist negative weights, the classical Dijkstra algorithm is not  guaranteed to attain the optimal solution in general as it operates in a greedy manner. Thus, we resort to a recursive algorithm proposed in \cite{cheng2004finding} to obtain the shortest path of the SSPP, with the details omitted due to limited space.
\end{itemize}

\section{Select Active IRS for Beam Routing or not?}
\vspace{-3pt}
In this section, we compare the achievable rates of the cooperative  IRS beam routing designs with and without the active IRS involved, based on which we shed key insights on whether to select the active IRS for beam routing or not. 

First, consider the hybrid-IRS beam routing design with the active IRS. Let $f_{\rm BA}^*$ and $f_{\rm AU}^*$  denote its optimal channel power gain of the BS$\to$active IRS and active IRS$\to$user links, which are determined by the inter-IRS distances and number of passive reflecting elements. Then the corresponding achievable rate is given by 
$R^*_{\mathrm{act}}=\log_{2}(1+\gamma^*_{\mathrm{act}})$, where the maximum SNR is 
\begin{equation}\label{Eq:active}
\gamma^*_{\mathrm{act}}=\frac{P_{\mathrm{B}} N f_{\rm AU}^*  f_{\rm BA}^*}
{ f_{\rm AU}^* \sigma_{\mathrm{F}}^{2} + \frac{\sigma^{2}(P_\mathrm{B} f_{\rm BA}^*+\sigma_{\mathrm{F}}^{2})}{P_{\mathrm{F}}}}.
\vspace{-3pt}
\end{equation}
Next,  for the passive-IRS beam routing design, we denote by $\tilde{f}_{\rm BU}^*=|(\tilde{\boldsymbol{g}}_{\mathrm{BU}}^{H})^*  \boldsymbol{w}^*_{\rm B}|^2$ the maximum channel power gain \cite{mei2020cooperative}. Then the achievable rate is given by 
$R^*_{\mathrm{pas}}=\log_{2}(1+\gamma^*_{\mathrm{pas}})$, where its maximum SNR is 
\begin{equation}\label{Eq:passive}
{\gamma^*_{\mathrm{pas}}=P_\mathrm{B} \tilde{f}_{\rm BU}^*} 
/{ \sigma^{2}}.
\end{equation}

Comparing \eqref{Eq:active} and \eqref{Eq:passive} leads to the following key result.

\begin{theorem}\label{The:Superior}
\emph{The active IRS should be selected for beam routing (i.e., $R^*_\mathrm{act}\ge R^*_\mathrm{pas}$), if we have 
\begin{align}
\frac{N}{\sigma_{\mathrm{F}}^{2}}  \geq \frac{\tilde{f}_{\rm BU}^*}{f_{\rm BA}^* \sigma^{2}} &+ 
\frac{P_\mathrm{B} \tilde{f}_{\rm BU}^*}{P_\mathrm{F} f_{\rm AU}^*\sigma_{\mathrm{F}}^{2}}  +\frac{\tilde{f}_{\rm BU}^*}{P_\mathrm{F} f_{\rm BA}^* f_{\rm AU}^*}
\end{align}}
\end{theorem}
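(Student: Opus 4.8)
\textbf{Proof proposal for Theorem~\ref{The:Superior}.}

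The plan is to reduce the rate comparison to an SNR comparison and then perform an elementary algebraic rearrangement. Since $\log_2(1+\cdot)$ is strictly increasing, the condition $R^*_{\mathrm{act}}\ge R^*_{\mathrm{pas}}$ is equivalent to $\gamma^*_{\mathrm{act}}\ge \gamma^*_{\mathrm{pas}}$. Substituting the closed-form expressions \eqref{Eq:active} and \eqref{Eq:passive}, this becomes
\begin{equation}\label{Eq:proof:start}
\frac{P_{\mathrm{B}} N f_{\rm AU}^*  f_{\rm BA}^*}{ f_{\rm AU}^* \sigma_{\mathrm{F}}^{2} + \frac{\sigma^{2}(P_\mathrm{B} f_{\rm BA}^*+\sigma_{\mathrm{F}}^{2})}{P_{\mathrm{F}}}}\ \ge\ \frac{P_\mathrm{B} \tilde{f}_{\rm BU}^*}{\sigma^{2}}.\nn
\end{equation}

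The next step is to clear denominators. All quantities involved ($P_{\mathrm{B}}$, $P_{\mathrm{F}}$, $N$, $\sigma^2$, $\sigma_{\mathrm{F}}^2$, $f_{\rm BA}^*$, $f_{\rm AU}^*$, $\tilde f_{\rm BU}^*$) are strictly positive, so in particular the denominator on the left-hand side is positive; hence cross-multiplication preserves the inequality direction. Cancelling the common factor $P_{\mathrm{B}}>0$ and multiplying both sides by the left denominator and by $\sigma^2$ yields
\begin{equation}
N f_{\rm AU}^*  f_{\rm BA}^*\,\sigma^{2}\ \ge\ \tilde{f}_{\rm BU}^*\Bigl( f_{\rm AU}^* \sigma_{\mathrm{F}}^{2} + \tfrac{\sigma^{2} P_\mathrm{B} f_{\rm BA}^*}{P_{\mathrm{F}}} + \tfrac{\sigma^{2}\sigma_{\mathrm{F}}^{2}}{P_{\mathrm{F}}}\Bigr).\nn
\end{equation}
Finally, dividing both sides by the positive quantity $\sigma_{\mathrm{F}}^{2} f_{\rm AU}^* f_{\rm BA}^* \sigma^{2}$ and distributing over the three terms on the right gives exactly
\begin{equation}
\frac{N}{\sigma_{\mathrm{F}}^{2}}\ \ge\ \frac{\tilde{f}_{\rm BU}^*}{f_{\rm BA}^* \sigma^{2}} + \frac{P_\mathrm{B} \tilde{f}_{\rm BU}^*}{P_\mathrm{F} f_{\rm AU}^*\sigma_{\mathrm{F}}^{2}} + \frac{\tilde{f}_{\rm BU}^*}{P_\mathrm{F} f_{\rm BA}^* f_{\rm AU}^*},\nn
\end{equation}
which is the claimed sufficient condition. (Since every manipulation above is an equivalence, the condition is in fact also necessary, though only sufficiency is asserted in the statement.)

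There is no genuine obstacle here beyond bookkeeping: the only point that needs a word of justification is the sign of the left-hand denominator in \eqref{Eq:proof:start}, which is immediate from positivity of all physical parameters, and the fact that $f_{\rm BA}^*$, $f_{\rm AU}^*$, $\tilde f_{\rm BU}^*$ are strictly positive channel power gains (they are finite products of positive path-loss terms, cf.\ \eqref{Eq:BA}--\eqref{Eq:AU}). If desired, one could append a sentence of interpretation — the inequality says the active-IRS route wins whenever the "noise-normalized element count'' $N/\sigma_{\mathrm{F}}^2$ dominates a weighted combination of the passive-route gain relative to the two sub-path gains — but this is not required for the proof itself.
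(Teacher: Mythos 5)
Your proof is correct and follows exactly the route the paper intends: the paper gives no explicit proof beyond the remark that ``comparing \eqref{Eq:active} and \eqref{Eq:passive} leads to the result,'' and your chain of equivalences (monotonicity of $\log_2(1+\cdot)$, cross-multiplication justified by positivity, division by $\sigma_{\mathrm{F}}^{2} f_{\rm AU}^* f_{\rm BA}^* \sigma^{2}$) is precisely that comparison carried out; all three resulting terms match the stated bound. Your parenthetical observation that the condition is in fact necessary as well as sufficient is also accurate.
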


Based on Theorem~\ref{The:Superior}, the effects of amplification power and number of reflecting elements of the active IRS on the active-IRS selection for beam routing are characterized as follows.

\begin{corollary}[Effect of amplification power]\label{Cor:Power}
\emph{Given the number of active reflecting elements $N$, we have $R_\mathrm{act}^*\ge R_\mathrm{pas}^*$, if the amplification power $P_{\mathrm{F}}$ satisfies
\begin{equation}
P_{\mathrm{F}} \geq \frac{\tilde{f}_{\rm BU}^* \sigma^{2} (P_\mathrm{B} f_{\rm BA}^* + \sigma_{\mathrm{F}}^{2}) }{f_{\rm AU}^*(N f_{\rm BA}^*\sigma^{2} - \tilde{f}_{\rm BU}^* \sigma_{\mathrm{F}}^{2})}.
\end{equation}}
\end{corollary}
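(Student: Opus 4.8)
The plan is to derive the corollary purely algebraically from the sufficient condition already proved in Theorem~\ref{The:Superior}, by solving that condition for $P_{\mathrm{F}}$ while treating $N$ and all the channel power gains $f_{\rm BA}^*, f_{\rm AU}^*, \tilde f_{\rm BU}^*$ as fixed positive constants. First I would isolate the two $P_{\mathrm{F}}$-dependent terms on one side, rewriting the premise of Theorem~\ref{The:Superior} as
\begin{equation}
\frac{N}{\sigma_{\mathrm{F}}^{2}} - \frac{\tilde{f}_{\rm BU}^*}{f_{\rm BA}^* \sigma^{2}} \;\ge\; \frac{1}{P_{\mathrm{F}}}\left(\frac{P_{\mathrm B}\, \tilde{f}_{\rm BU}^*}{f_{\rm AU}^*\,\sigma_{\mathrm{F}}^{2}} + \frac{\tilde{f}_{\rm BU}^*}{f_{\rm BA}^*\, f_{\rm AU}^*}\right).\nonumber
\end{equation}
Collecting each side over a common denominator turns the right-hand side into $\tilde{f}_{\rm BU}^*(P_{\mathrm B} f_{\rm BA}^* + \sigma_{\mathrm{F}}^{2})/(P_{\mathrm{F}}\, f_{\rm AU}^*\, f_{\rm BA}^*\, \sigma_{\mathrm{F}}^{2})$ and the left-hand side into $(N f_{\rm BA}^* \sigma^{2} - \tilde{f}_{\rm BU}^* \sigma_{\mathrm{F}}^{2})/(f_{\rm BA}^*\, \sigma^{2}\, \sigma_{\mathrm{F}}^{2})$.

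Next I would cancel the common positive factor $f_{\rm BA}^*\,\sigma_{\mathrm{F}}^{2}$ appearing in both denominators, multiply through by $P_{\mathrm{F}}>0$, and regroup, which gives $P_{\mathrm{F}}\,(N f_{\rm BA}^* \sigma^{2} - \tilde{f}_{\rm BU}^* \sigma_{\mathrm{F}}^{2}) \ge \sigma^{2}\,\tilde{f}_{\rm BU}^*(P_{\mathrm B} f_{\rm BA}^* + \sigma_{\mathrm{F}}^{2})/f_{\rm AU}^*$. Dividing both sides by the coefficient of $P_{\mathrm{F}}$ then yields exactly the displayed bound $P_{\mathrm{F}} \ge \tilde{f}_{\rm BU}^* \sigma^{2} (P_{\mathrm B} f_{\rm BA}^* + \sigma_{\mathrm{F}}^{2})/\bigl(f_{\rm AU}^*(N f_{\rm BA}^*\sigma^{2} - \tilde{f}_{\rm BU}^* \sigma_{\mathrm{F}}^{2})\bigr)$, and since every step is reversible this bound is in fact equivalent to the premise of Theorem~\ref{The:Superior}, hence implies $R_{\mathrm{act}}^*\ge R_{\mathrm{pas}}^*$.

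The only point needing care — and the sole real obstacle — is the sign of $N f_{\rm BA}^* \sigma^{2} - \tilde{f}_{\rm BU}^* \sigma_{\mathrm{F}}^{2}$, since this quantity must be strictly positive both for the final division to preserve the direction of the inequality and for the right-hand side of the corollary to be a well-defined positive threshold. I would note that this is not an extra assumption but is forced by the setup: because the right-hand side of the condition in Theorem~\ref{The:Superior} is strictly positive, any $P_{\mathrm{F}}$ meeting that condition necessarily makes its left-hand side positive, i.e., $N/\sigma_{\mathrm{F}}^{2} > \tilde{f}_{\rm BU}^*/(f_{\rm BA}^*\sigma^{2})$, which is precisely $N f_{\rm BA}^* \sigma^{2} > \tilde{f}_{\rm BU}^* \sigma_{\mathrm{F}}^{2}$. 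Equivalently, if $N$ is so small that this fails, no finite amplification power can make the sufficient condition hold and the corollary is vacuously true; otherwise the denominator is positive and the manipulation above is valid, completing the proof.
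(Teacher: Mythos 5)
Your derivation is correct and is exactly the route the paper takes: Corollary~\ref{Cor:Power} is simply the condition of Theorem~\ref{The:Superior} solved for $P_{\mathrm{F}}$, and your algebra (common denominators, cancellation of $f_{\rm BA}^*\sigma_{\mathrm{F}}^{2}$, then division by the coefficient of $P_{\mathrm{F}}$) reproduces the stated threshold, with the added care --- absent from the paper --- of checking the sign of $N f_{\rm BA}^*\sigma^{2} - \tilde{f}_{\rm BU}^*\sigma_{\mathrm{F}}^{2}$. One small caveat: in the degenerate case $N f_{\rm BA}^*\sigma^{2} \le \tilde{f}_{\rm BU}^*\sigma_{\mathrm{F}}^{2}$ the corollary is not ``vacuously true'' as you claim, since the stated threshold becomes nonpositive and the hypothesis $P_{\mathrm{F}}\ge(\cdot)$ is then satisfied by every $P_{\mathrm{F}}>0$ while the conclusion no longer follows from Theorem~\ref{The:Superior}; the correct reading is that the corollary implicitly assumes $N > \tilde{f}_{\rm BU}^*\sigma_{\mathrm{F}}^{2}/(f_{\rm BA}^*\sigma^{2})$, which is precisely the regime in which your reversible-manipulation argument establishes the equivalence.
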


Corollary~\ref{Cor:Power} shows that it is preferred to select the active IRS for cooperatively constructing a multi-reflection path if the active IRS's amplification power is sufficiently large. This is intuitively expected since a higher amplification power will lead to a higher amplification factor.

\begin{corollary}[Effect of number of active reflecting elements]\label{Cor:AE}
\emph{Given the amplification power $P_\mathrm{F}$, we have $R_\mathrm{act}^*\ge R_\mathrm{pas}^*$, if the number of active reflecting elements $N$ satisfies
\begin{align}
\!\!\! N \geq \frac{\tilde{f}_{\rm BU}^* \sigma_{\mathrm{F}}^{2}}{f_{\rm BA}^*\sigma^{2}} \!+\! 
\frac{P_\mathrm{B} \tilde{f}_{\rm BU}^*  }{P_\mathrm{F} f_{\rm AU}^*} 
+\frac{\tilde{f}_{\rm BU}^* \sigma_{\mathrm{F}}^{2}}{P_\mathrm{F} f_{\rm BA}^* f_{\rm AU}^*}.
\end{align}}
\end{corollary}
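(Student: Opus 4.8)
The plan is to prove Theorem~\ref{The:Superior} by a direct algebraic comparison of the two SNR expressions \eqref{Eq:active} and \eqref{Eq:passive}. The condition $R^*_{\mathrm{act}}\ge R^*_{\mathrm{pas}}$ is equivalent to $\gamma^*_{\mathrm{act}}\ge\gamma^*_{\mathrm{pas}}$ since $\log_2(1+\cdot)$ is monotonically increasing. So first I would write down the inequality
\[
\frac{P_{\mathrm{B}} N f_{\rm AU}^*  f_{\rm BA}^*}{ f_{\rm AU}^* \sigma_{\mathrm{F}}^{2} + \frac{\sigma^{2}(P_\mathrm{B} f_{\rm BA}^*+\sigma_{\mathrm{F}}^{2})}{P_{\mathrm{F}}}}\;\ge\;\frac{P_\mathrm{B}\tilde{f}_{\rm BU}^*}{\sigma^{2}}.
\]
Since all quantities ($P_{\mathrm{B}},P_{\mathrm{F}},N,\sigma^2,\sigma_{\mathrm{F}}^2,f_{\rm BA}^*,f_{\rm AU}^*,\tilde f_{\rm BU}^*$) are strictly positive, the denominator on the left is positive, so I may cross-multiply without flipping the inequality. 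Cancelling the common factor $P_{\mathrm{B}}$ on both sides, this becomes
\[
N f_{\rm AU}^*  f_{\rm BA}^*\,\sigma^{2}\;\ge\;\tilde{f}_{\rm BU}^*\!\left( f_{\rm AU}^* \sigma_{\mathrm{F}}^{2} + \frac{\sigma^{2}(P_\mathrm{B} f_{\rm BA}^*+\sigma_{\mathrm{F}}^{2})}{P_{\mathrm{F}}}\right).
\]

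Next I would divide both sides by $f_{\rm BA}^* f_{\rm AU}^* \sigma^{2}$, which is positive, to isolate $N$ on the left. Expanding the right-hand side term by term gives exactly the three summands claimed in the theorem: the first term $\tilde f_{\rm BU}^* f_{\rm AU}^*\sigma_{\mathrm{F}}^2$ divides to $\tilde{f}_{\rm BU}^*/(f_{\rm BA}^* \sigma^{2})$; the piece $\tilde f_{\rm BU}^* \sigma^2 P_{\mathrm B} f_{\rm BA}^*/P_{\mathrm F}$ divides to $P_\mathrm{B} \tilde{f}_{\rm BU}^*/(P_\mathrm{F} f_{\rm AU}^*\sigma_{\mathrm{F}}^{2})$ --- wait, I should be careful with the $\sigma_{\mathrm F}^2$ bookkeeping here, matching the precise statement in the theorem --- and the last piece $\tilde f_{\rm BU}^*\sigma^2\sigma_{\mathrm F}^2/P_{\mathrm F}$ divides to $\tilde{f}_{\rm BU}^*/(P_\mathrm{F} f_{\rm BA}^* f_{\rm AU}^*)$. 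Collecting these yields precisely
\[
\frac{N}{\sigma_{\mathrm{F}}^{2}}  \geq \frac{\tilde{f}_{\rm BU}^*}{f_{\rm BA}^* \sigma^{2}} +
\frac{P_\mathrm{B} \tilde{f}_{\rm BU}^*}{P_\mathrm{F} f_{\rm AU}^*\sigma_{\mathrm{F}}^{2}}  +\frac{\tilde{f}_{\rm BU}^*}{P_\mathrm{F} f_{\rm BA}^* f_{\rm AU}^*},
\]
after dividing once more by $\sigma_{\mathrm F}^2$ so that the left side reads $N/\sigma_{\mathrm F}^2$ as stated. Since every manipulation is an equivalence (multiplication/division by strictly positive quantities), the final inequality is not merely sufficient but in fact equivalent to $\gamma^*_{\mathrm{act}}\ge\gamma^*_{\mathrm{pas}}$, hence to $R^*_{\mathrm{act}}\ge R^*_{\mathrm{pas}}$.

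There is essentially no deep obstacle here; the result is a one-line rearrangement once \eqref{Eq:active} and \eqref{Eq:passive} are in hand, and those in turn follow from Lemma~\ref{Lem:GivenPath} and the channel-power-gain formulas \eqref{Eq:BA}--\eqref{Eq:eta}. The only thing requiring care is the clerical bookkeeping of which factors of $\sigma_{\mathrm{F}}^{2}$, $\sigma^{2}$, $P_{\mathrm F}$ land in which of the three terms, so that the displayed inequality matches exactly the form asserted; I would double-check this by multiplying the claimed right-hand side back through by $\sigma_{\mathrm F}^2 f_{\rm BA}^* f_{\rm AU}^* \sigma^2 P_{\mathrm F}$ and verifying it reproduces the cross-multiplied inequality. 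One should also note the implicit positivity assumption $N f_{\rm BA}^*\sigma^2 > \tilde f_{\rm BU}^*\sigma_{\mathrm F}^2$ that makes the bound in Corollary~\ref{Cor:Power} meaningful (a positive denominator there), but for the theorem itself no such side condition is needed since we only assert a sufficient condition in the forward direction.
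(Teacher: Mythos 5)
Your proposal is correct and follows exactly the route the paper takes (the paper derives Corollary~\ref{Cor:AE} as an immediate rearrangement of Theorem~\ref{The:Superior}, which itself is the cross-multiplied comparison of \eqref{Eq:active} and \eqref{Eq:passive}); indeed, the inequality you obtain just before your final division by $\sigma_{\mathrm{F}}^{2}$ is precisely the stated bound on $N$. The algebra checks out, and your observation that the condition is in fact an equivalence, not merely sufficient, is accurate.
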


Corollary~\ref{Cor:AE} indicates that it is desirable to select the active IRS for beam routing if the number of active reflecting elements is sufficiently large. This is because the achievable rate over the hybrid IRS beam routing has a power scaling order of $\mathcal{O}(N)$.

\begin{remark}[What determines the beam routing path?]\label{Re:Affects}
\emph{
Lemma 2 shows that if the active IRS is involved in the beam routing, the best routing path is determined by the passive IRSs only via e.g., the number of passive reflecting elements $M$ and inter-IRS distance; while it is  independent of the parameters of active IRS. However, the optimal beam routing design in the proposed hybrid active/passive IRS aided system is jointly determined by the active and passive IRSs. Specifically, if the amplification power  $P_{\rm F}$ and/or number of active elements $N$ 
are sufficiently large, the active IRS should be selected for cooperative beam routing and vice versa.
 }       
\end{remark}
\vspace{-16pt}
\begin{figure}[h]
    \centering
\includegraphics[height=2cm,width=6cm]{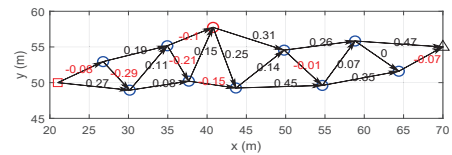}
    \vspace{-3pt}
    \caption{Graph representation of the simulation setup with $M=\mathrm{1400}$.}\label{fig:Graph}
    \vspace{-10pt}
\end{figure}
\vspace{-3pt}
\section{Numerical Results}
In this section, we present numerical results to demonstrate the efficacy of the proposed cooperative beam routing design for the new hybrid active/passive IRS aided wireless system. 
The simulation setup is as follows unless otherwise specified. As shown in Fig.~2,  one active IRS (red circle) and nine passive IRSs (blue circle) are deployed in  an indoor environment (e.g., smart factory), where the communication links with available LoS paths are represented by edges. Moreover, we set 
$M=1400$,
$T=4$, $\lambda=0.06$ m,  $\beta=(\lambda/4\pi)^{2} = -46$ dB,  $\sigma^{2}= -80$ dBm, and $\sigma_{\mathrm{F}}^{2} = -70$ dBm \cite{you2021wireless}.

First, we compare the beam routing designs of the proposed algorithm under different $N$ and $M$.
It is observed from Figs.~3(a)--3(b) that the number of passive reflecting elements  $M$ has significant effects on the beam routing path when the active IRS involves in the beam routing. Specifically, with a larger $M$ (i.e., $M=1500$), the multi-reflection path consists of more passive IRSs. This is because a larger $M$ leads to a higher passive beamforming gain, thus a multi-reflection path with more passive IRSs tends to provide more prominent multiplicative passive beamforming gain, which suffices to compensate the more severe multi-reflection path-loss. Besides, one can observe from  Figs.~3(b)--3(c) that when the number of active reflecting elements is sufficiently large, (i.e., $N \geq 800$ when $M=1200$), the hybrid-IRS beam routing design outperforms the passive-IRS counterpart, which agrees with the result in Corollary 2.

\begin{figure}[t]
    \centering
    \includegraphics[height=5cm,width=6.5cm]{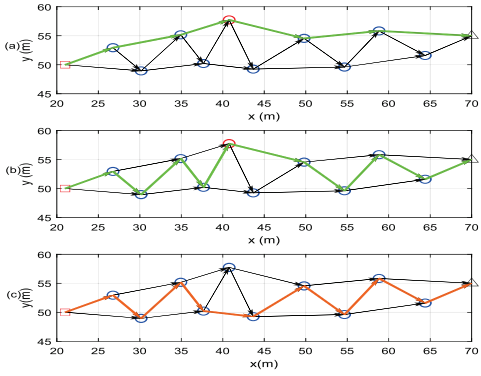}
    \caption{Optimal route versus  $M$ and $N$: (a) $M=800$, $N=300$; (b) $M=1200$, $N=800$; (c) $M= 1200$.}\label{fig:Routing}
    \vspace{-22pt}
\end{figure}

Next, we show in Fig.~4(a) the achievable rate of the proposed hybrid active/passive IRS beam routing design versus the amplification factor $P_{\rm F}$ with $M=N=2000$. For performance comparison, we consider the following three benchmark schemes:  1) \emph{hybrid IRSs with myopic beam routing}, which sequentially selects IRSs with the minimum edge weight for routing until reaching the user; 2) \emph{hybrid IRSs with random beam routing}, for which the IRSs are selected at random; 3) \emph{passive IRSs scheme}, for which the active IRS is replaced by a passive IRS and the corresponding beam routing design is obtained by similar methods in \cite{mei2020cooperative}. First, it is observed that the achievable rate of the hybrid IRS aided system monotonically increases with $P_{\rm F}$. Second, the proposed optimal beam routing scheme for the hybrid IRS aided system significantly outperforms the myopic and random benchmarks. Third,  the hybrid IRS aided system achieves a higher rate when $P_{\mathrm{F}}$ is sufficiently high (i.e., $P_{\mathrm{F}}\ge 14$ dBm), and the gain increases with $P_{\mathrm{F}}$. This is expected since a higher $P_{\mathrm{F}}$ tends to provide a higher power amplification gain, which is in accordance with Corollary~\ref{Cor:Power}.
  
Last, we show in Fig.~4(b) the achievable rate of the proposed hybrid IRS aided system versus $N$ with $M= 1200$. It is observed that the passive IRS aided system attains a higher rate when $N$ is small. However, when $N$ is sufficiently large (e.g., $N\geq500$ for $P_{\mathrm{F}}=20$ dBm, and $N\geq750$ for $P_{\mathrm{F}}=10$ dBm), the proposed hybrid-IRS aided system achieves a much larger rate than the passive-IRS system. This indicates that the optimal beam routing should incorporate the active IRS when $N$ is large, which is consistent with Corollary~\ref{Cor:AE}. 

\vspace{-0.3cm}
\section{Conclusions}
\vspace{-0.1cm}
In this paper, we considered a new hybrid active/passive IRS aided wireless communication system, where the active and passive IRSs cooperatively establish a virtual LoS multi-hop reflection path from the BS to the user. An efficient algorithm was proposed to design the beamforming of the BS and active/passive IRSs, as well as the multi-reflection path for rate maximization. 
It was shown that the proposed hybrid active/passive IRS beam routing design achieves a higher rate than the counterpart with passive IRSs only, when the amplification power and/or number of active reflecting elements are sufficiently large.

\begin{figure}[t]
\centering
\subfigure[Achievable rate versus amplification power, $P_{\mathrm{F}}$.]{\label{FigWeightRateW}
\includegraphics[height=3.2cm]{./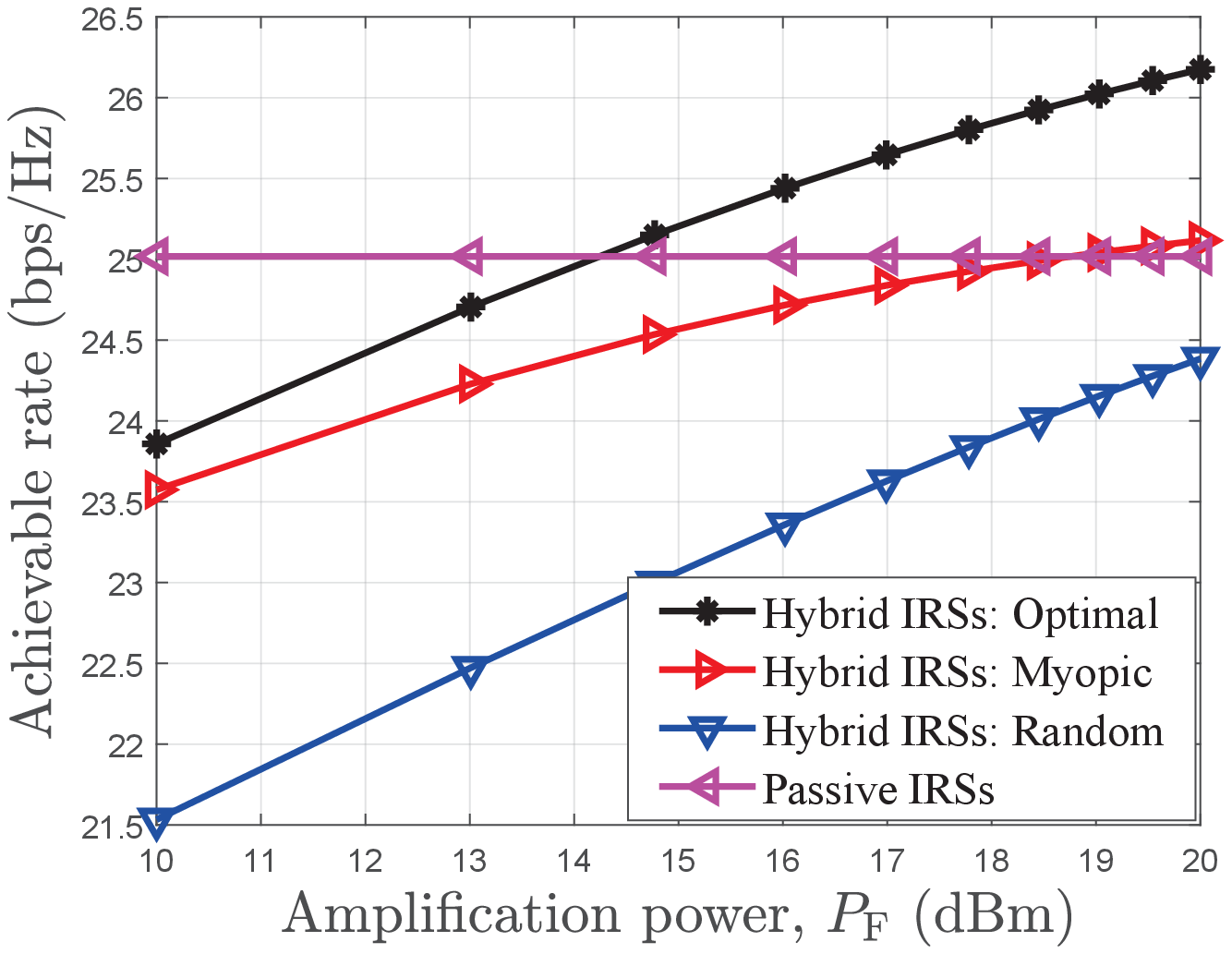}}
\hspace{2pt}
\subfigure[Achievable rate versus number of active IRS elements, $N$.]{\label{FigRateNUpDown}
\includegraphics[height=3.2cm]{./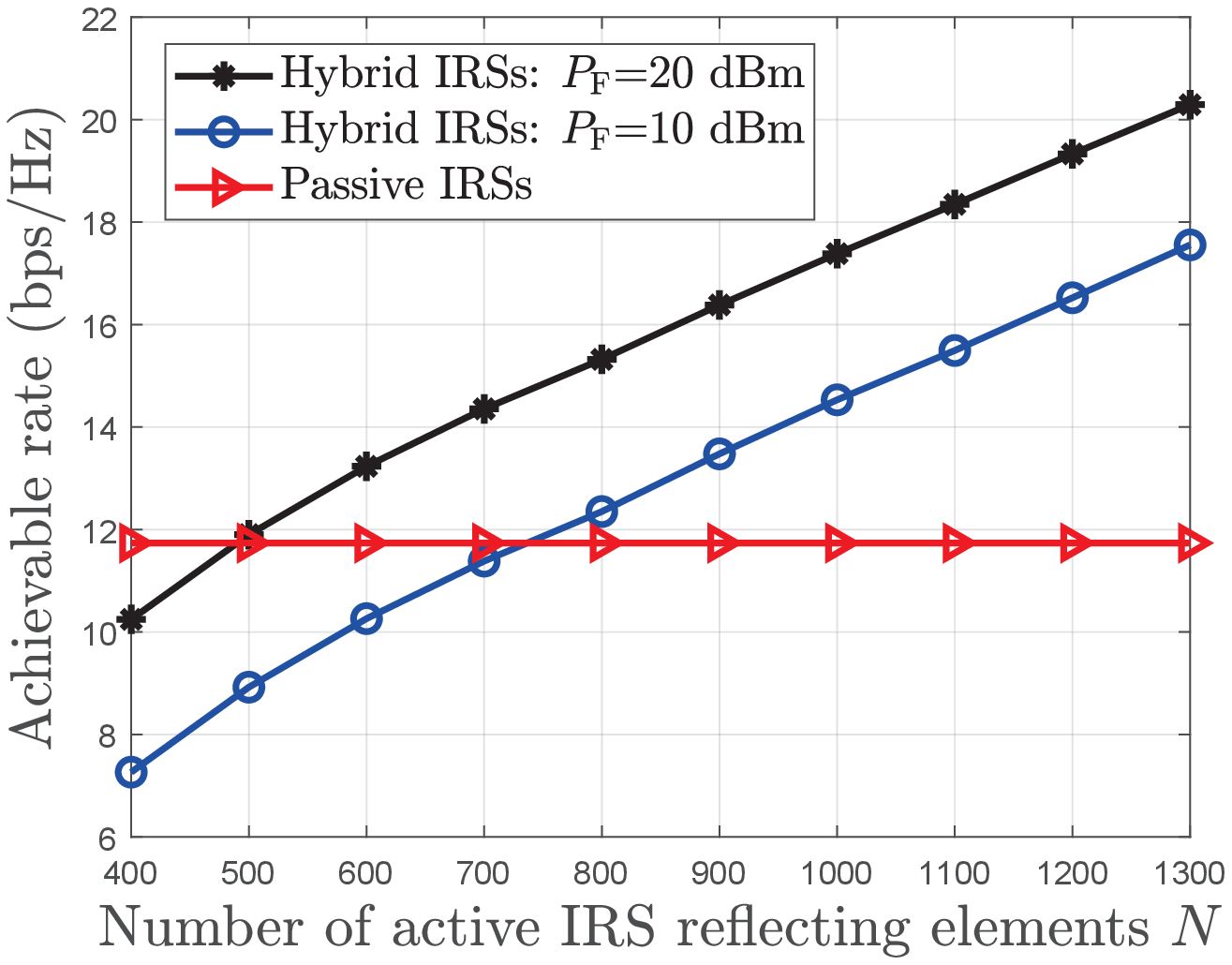}}
\label{Fig:Rate}
\vspace{-5pt}
\caption{Effects of system parameters on the achievable rate.}
\vspace{-18pt}
\end{figure} 

\bibliographystyle{IEEEtran}
\vspace{-10pt}
\bibliography{IEEEabrv,Ref}
\vspace{-6pt}

\end{document}